\newtheorem{theorem}{Theorem}
\newtheorem{lemma}{Lemma}
\newtheorem{corollary}{Corollary}
\def\1{\textbf{1}}
\begin{document}

\title{Convexity Conditions for 802.11 WLANs}
\author{Vijay G. Subramanian, Douglas J. Leith\\Hamilton Institute, NUI Maynooth\thanks{This material is based upon works supported by the HEA PRTLI Network Maths project and Science Foundation Ireland under Grant No. 07/IN.1/I901. }}
\maketitle

\begin{abstract}
In this paper we characterise the maximal convex subsets of the (non-convex) rate region in 802.11 WLANs.   In addition to being of intrinsic interest as a fundamental property of 802.11 WLANs,  this characterisation can be exploited to allow the wealth of convex optimisation approaches to be applied to 802.11 WLANs.  
\end{abstract}

\section{Introduction}


We establish a number of fundamental convexity properties of the rate region of 802.11 WLANs.   Firstly, we establish a simple constraint that determines the station transmission attempt probabilities on the rate region boundary.   Secondly, we show that, while the rate region is non-convex, its complement in the positive orthant is strictly convex.  This property is illustrated in Fig. \ref{fig:rateregion} where the shaded area indicates the rate region and the unshaded area above the rate region boundary indicates its complement in the positive orthant.   Thirdly, we obtain a complete and explicit characterisation of the maximal convex subsets of the rate region, two such subsets being indicated by the hashed areas in Fig. \ref{fig:rateregion}.     It is important to note here that  ``obvious'' constraints  do not yield convex subsets of the rate region, let alone maximal subsets.  Examples of constraints for which it is straightforward to show (proofs omitted due to lack of space) that convexity does not result (for WLANs with $n>2$ stations) include: 
\begin{enumerate}
\item Constraining the maximum transmission attempt probabilities, \emph{i.e.} enforcing $\tau_i\le\tau_{max,i}$ for station $i$;
\item Constraining the maximum value of $\sum_{i=1}^n\tau_i$ where $n$ is the number of stations in the WLAN;
\item Constraining the maximum value of the WLAN collision probability;
\item Constraining the minimum value of the WLAN idle probability $P_{idle}$.
\end{enumerate}

Our results complement the recent observation in \cite{logconvexity10} that the 802.11 rate region is log-convex.    As well as being of interest in their own right, our results provide the basis for applying powerful convex optimisation methods to the analysis and design of fair throughput allocations for 802.11 WLANs -- we discuss this in more detail in Section \ref{sec:opt} below.

The paper is organised as follows.   We first introduce our network model in Section  \ref{sec:nm}, then consider the rate region boundary in Section \ref{sec:rr}.  In Section \ref{sec:convexity} we establish that the complement of the rate region is strictly convex and in Section \ref{sec:maximal} characterise the maximal convex subsets of the rate region.   We summarise our conclusions in Section \ref{sec:concl}.

\section{Network Model}\label{sec:nm}

The 802.11e standard extends and subsumes the standard 802.11 DCF (Distributed Coordinated Function) contention mechanism by allowing the adjustment of MAC parameters that were previously fixed, including $DIFS$ (called $AIFS$ in 802.11e), $CW_{min}$ and $CW_{max}$.
 In addition, 802.11e adds a TXOP mechanism that specifies the duration during which a station can keep transmitting without releasing the channel once it wins a transmission opportunity. In order not to release the channel, a SIFS interval is inserted between each frame-ACK pair. A successful transmission round consists of multiple frames and ACKs. By adjusting this time, the number of framess that may be transmitted by a station at each transmission opportunity can be controlled.  A salient feature of
the TXOP operation is that, if a large TXOP is assigned and there are not enough
packets to be transmitted, the TXOP period is ended immediately to avoid wasting
bandwidth.

\begin{figure}
\centering
\includegraphics[width=0.49\columnwidth]{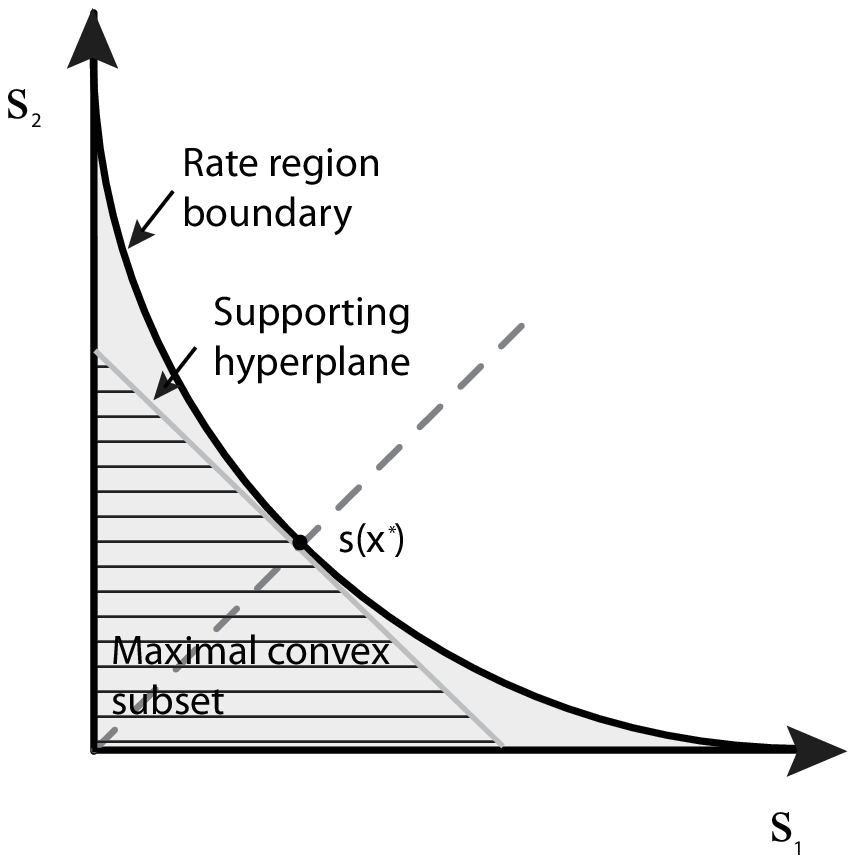}
\includegraphics[width=0.49\columnwidth]{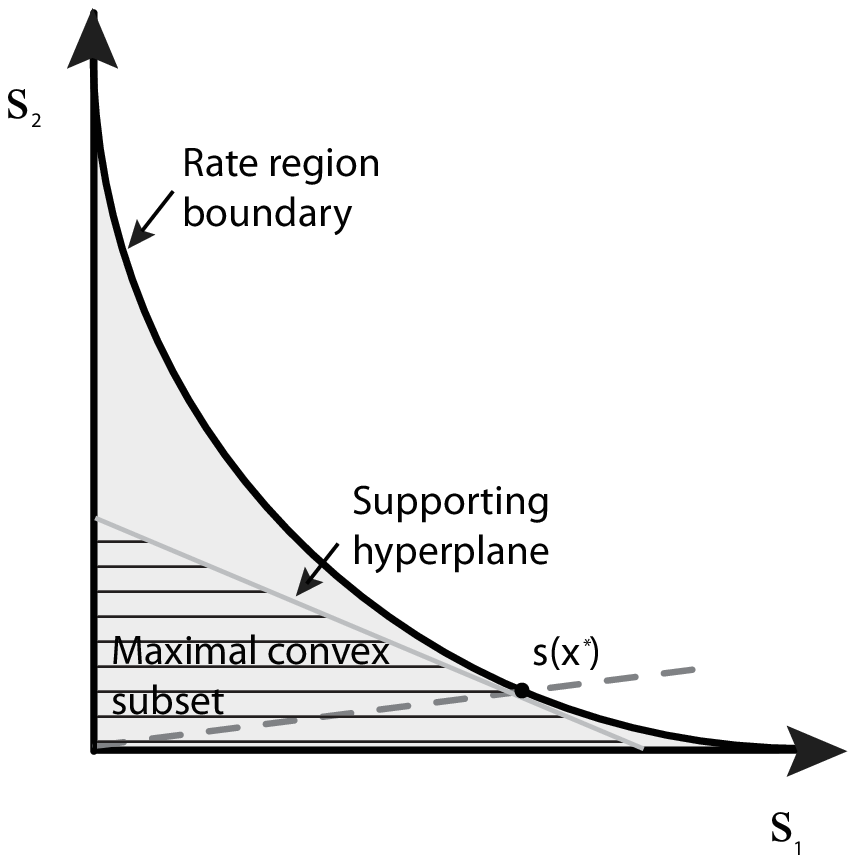}
\caption{Illustrating the rate region (shaded) of an 802.11 WLAN with two stations.  The complement of the rate region in the positive orthant (unshaded) is strictly convex.  Also shown are the maximal convex subsets corresponding to two different boundary points.}\label{fig:rateregion}
\end{figure}

We consider an 802.11e WLAN with $n$ stations.   As described in \cite{David_TON_2007,clifford06}, we divide time into MAC slots, where each MAC slot may consist either of a PHY idle slot, a successful transmission or a colliding transmission (where more than one station attempts to transmit simultaneously).  Let $\tau_i$ denote the probability that station $i$ attempts a transmission.  The mean throughput of station $i$ is
\begin{align*}
s_i (\tau) = \frac{P_{succ,i} D_i}{\sigma P_{idle} + \sum_{i=1}^nT_{s,i} P_{succ,i}+T_c(1-P_{idle}-P_{succ})} 
\end{align*}
where $P_{idle}=\prod_{k=1}^n (1-\tau_k)$, $P_{succ,i}=\tau_i \prod_{k=1,k\ne i}^n(1-\tau_k)$, $P_{succ}=\sum_{i=1}^nP_{succ,i}$, $\tau=[\tau_1\ ...\ \tau_n]^T$, $D_i$ is the mean number of bits sent by station $i$ in a successful transmission, $\sigma$ is the PHY idle slot duration, $T_{s,i}$ is the mean duration of a successful transmission by station $i$ and $T_c$ the mean duration of a collision.    Note that $T_{s,i}$ and $D_i$ are allowed to depend on the station to encompass situations where stations may transmit different sized TXOP bursts on winning a transmission opportunity.

We will assume that frame transmissions are of duration $T_c$, in which case the collision duration $T_c$ is invariant with the station attempt probabilities $\tau_i$ (if stations use frames of different duration then the duration of a collision would depend on the specific set of stations involved in a collision and so on the attempt probabilities).    A TXOP burst by station $i$  consists of a sequence of $N_i=T_{s,i}/T_c$ frame transmissions each of duration $T_c$ and  $L_i=D_i/N_i$ is the size, in bits, of the payload of each frame. We suppose that there are upper/lower bounds on the admissible TXOP burst size,  \emph{i.e.} $1\le \underbar{N}_i\le N_i\le \bar{N}_i$.

It will prove useful to work in terms of the quantity $x_i=\tau_i/(1-\tau_i)$ rather than $\tau_i$ -- observe that $x_i\in[0,\infty)$ for $\tau_i\in[0,1)$.  With this transformation we have that $P_{idle}=1/\prod_{k=1}^n (1+x_k)$ and $P_{succ,i}=x_iP_{idle}$ and
\begin{align}
s_i(x,N)=\frac{N_ix_i}{X(x,N)}\frac{L_i}{T_{c}}
\label{eq:tput_exp}
\end{align}
where 
\begin{align}\label{eq:Xdef}
X(x,N)=a+\sum_{k=1}^n (N_k-1)x_k + \prod_{k=1}^n (1+x_k)-1
\end{align}
with $a=\sigma/T_c$. For a fixed $\sigma$, the throughputs scale with $T_c$ so henceforth we assume that $T_c=1$.


\section{Rate Region}\label{sec:rr}
The rate region is the set $R$ of achievable throughput vectors $s(x,N)=[s_1\ ...\ s_n]^T$ as the vector $x$ of attempt rates ranges over domain $[0,\infty)^n$ and the vector $N$ of TXOP burst sizes range over $\prod_{k=1}^n[\underbar{N}_k,\bar{N}_k]$.    
In this section we establish some basic properties of  the boundary of the rate region.   

\subsection{Rate Region Boundary}

\begin{lemma}\label{lem:one}The boundary of the rate region is the set of throughput vectors $s(x^*,\bar{N})$ with $x^*\in B$, where 
\begin{align}\label{eq:boundary}
B=\left\{x:h(x)=1\right\}
\end{align}
and 
\begin{align}\label{eq:h}
h(x)= \sum_{i=1}^n \frac{x_i}{1+x_i}+\frac{1-a}{\prod_{j=1}^n(1+x_j)}
\end{align}
\end{lemma}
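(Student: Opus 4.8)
Looking at the statement, the "boundary" clearly means the Pareto‑efficient frontier of the rate region (which, as in Fig.~\ref{fig:rateregion}, is the set of throughput vectors lying on or below that frontier), so the plan is to characterise the efficient points. The first step would be to dispose of the burst sizes $N$. Reparametrising the attempt rates by $y_i=N_ix_i\in[0,\infty)$ turns the throughput into $s_i=L_iy_i/X$, and, viewing $X$ as a function of $(y,N)$, a direct differentiation gives $\partial X/\partial N_k=\frac{y_k}{N_k^2}\bigl(1-\prod_{j\ne k}(1+x_j)\bigr)\le 0$. Hence for fixed $y$ every $s_i$ is non‑decreasing in each $N_k$, so any achievable throughput vector is dominated component‑wise by one obtained at $N=\bar N$. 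Since $R_{\bar N}:=\{s(x,\bar N):x\in[0,\infty)^n\}\subseteq R$ and every point of $R$ is dominated by a point of $R_{\bar N}$, the two sets have the same efficient frontier, and I would fix $N=\bar N$ from then on.

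Next I would compute the Jacobian of $x\mapsto s(x,\bar N)$. Because $X$ is affine in each $x_i$, it equals $\mathrm{diag}(\bar N_iL_i/X^2)\,(XI-x(\nabla X)^T)$, so by the rank‑one identity $\det(XI-x(\nabla X)^T)=X^{n-1}(X-x^{T}\nabla X)$ and a short simplification of $X-x^{T}\nabla X$,
\begin{align*}
\det\!\Bigl(\tfrac{\partial s}{\partial x}\Bigr)=\Bigl(\textstyle\prod_i\tfrac{\bar N_iL_i}{X^2}\Bigr)\,X^{n-1}\Bigl(\textstyle\prod_k(1+x_k)\Bigr)\bigl(1-h(x)\bigr),
\end{align*}
so the map is a local diffeomorphism away from $B=\{h=1\}$, with $B$ exactly its critical set. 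In the same vein, along any ray one finds $\frac{d}{dt}s_i(tx,\bar N)=\frac{\bar N_iL_ix_i\prod_k(1+tx_k)}{X(tx)^2}\bigl(1-h(tx)\bigr)$, whose sign is the same for all $i$. Since $h(0)=1-a<1$, $h(tx)\to n$ as $t\to\infty$, and $h(tx)$ is strictly increasing in $t$, there is a unique $t^{\ast}$ with $t^{\ast}x\in B$, and each $s_i(tx,\bar N)$ rises on $[0,t^{\ast}]$ then falls. Consequently every achievable throughput vector is dominated by a point of $s(B,\bar N)$, which gives the inclusion "boundary $\subseteq s(B,\bar N)$"; the pieces of $s(B,\bar N)$ with some $x^{\ast}_i=0$ lie on coordinate hyperplanes and would be handled by induction on $n$, using $\partial s_i/\partial x_j<0$ for $j\ne i$ to argue that activating an extra station strictly shrinks the others' region.

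For the reverse inclusion I would show each $s(x^{\ast},\bar N)$ with $x^{\ast}\in B$ is genuinely on the boundary. There the Jacobian has rank exactly $n-1$ (kernel $\mathrm{span}(x^{\ast})$), and its left null vector $w$ satisfies $w_i\propto\partial_iX(x^{\ast})/(\bar N_iL_i)>0$ for every $i$; thus the image hyperplane $w^{\perp}$ has an everywhere‑positive normal, so no infinitesimal perturbation raises all throughputs at once and $s(x^{\ast},\bar N)$ cannot be interior to $R_{\bar N}$.

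The step I expect to be the real obstacle is upgrading this last argument from first‑order to global — ruling out a distant $x'$ with $s(x',\bar N)\gneq s(x^{\ast},\bar N)$. By the ray argument this reduces to showing $x\mapsto s(x,\bar N)$ is injective on $B$ with antichain image; setting $\lambda=X(x')/X(x^{\ast})$ one gets $x^{\ast}_i\le\lambda x'_i$ for all $i$, and the case $\lambda\le 1$ is immediate from strict monotonicity of $h$ (injectivity coming from the equivalent description $B=\{\sum_{|S|\ge 2}(|S|-1)\prod_{i\in S}x_i=a\}$), whereas the case $\lambda>1$ is the delicate one; I would expect to close it using the strict convexity of the complement of the rate region established in the next section, which forces the efficient surface $s(B,\bar N)$ to be an antichain.
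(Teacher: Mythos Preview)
Your core argument is the same as the paper's: you dispose of $N$ by monotonicity and then maximise along rays, finding the unique turning point at $h=1$.  The paper does exactly this (it works with $\lambda/X(\lambda\bar x,N)$, checks concavity in $\lambda$, and differentiates), so up through your ray computation the two proofs coincide; the Jacobian determinant calculation is correct but not used by the paper.

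Where you diverge is the ``real obstacle'' at the end, and here there is both a circularity and a missed simplification.  The circularity: you propose to close the global antichain step by invoking the strict convexity of $\bar R$ from the next section, but in the paper that lemma (Lemma~\ref{eq:convexity}) \emph{uses} Lemma~\ref{lem:one} in its proof, so you cannot appeal to it here.  The missed simplification: the obstacle is a phantom.  Because $s_i(tx,\bar N)=\bigl(t/X(tx,\bar N)\bigr)\,\bar N_iL_ix_i$, the ray $\{tx:t>0\}$ in $x$-space maps to the ray through the origin in throughput-space with fixed direction $(\bar N_iL_ix_i)_i$.  Hence distinct rays in $x$-space (equivalently, distinct points of $B$, since each ray meets $B$ exactly once by your monotonicity of $h(tx)$) correspond to distinct throughput directions, and along each throughput direction the boundary point is the unique maximum you already found.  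This immediately gives both injectivity of $s(\cdot,\bar N)|_B$ and the antichain property, with no second-order or global argument needed.  The paper's parametrisation by the throughput direction $y$ builds this observation in from the start, which is why it never confronts your ``obstacle''.
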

\begin{proof}
Take a vector $y$, with $y_i > 0$ normalised such that $\sum_i y_i=1$, and set $x_i = \lambda \bar{x}_i$, where $\lambda\ge 0$ and $\bar{x}_i=y_i/(L_i N_i)$.  From (\ref{eq:tput_exp}), the vector of station throughputs is then $s=\lambda/X(\lambda \bar{x},N) y$.  Since $\lambda$, $X(\bullet,N)$ are scalars it can be seen that varying $\lambda$ adjusts the position of the throughput vector on the ray in direction $y$ passing through the origin. To determine the rate region boundary we need to find the value of $\lambda$ that maximises  $\lambda/X(\lambda \bar{x},N)$.  
By inspection of the first derivative it can be verified that $\lambda/X$ is monotonically increasing in the TXOP burst sizes $N_i$. Therefore at the rate region boundary we must have $N_i =\bar{N}_i, i=1,...,n$.
It can also be verified by inspection of the second derivative that $\lambda/X$ is a concave function of $\lambda$ and so has a unique turning point.  To determine the turning point of $ \lambda/X$, differentiating $\lambda/X$ with respect to $\lambda$ yields
$$
\frac{X - \lambda  \bigg(\sum_{i=1}^n \frac{y_i(N_i-1)}{L_i} + \sum_{i=1}^n \frac{y_i}{L_iN_i}\prod_{j\ne i} \Big(1+\frac{\lambda y_i}{L_iN_i}\Big)\bigg)}{X^2}
$$
 and setting this derivative equal to zero we have that the $\lambda^*$ corresponding to the turning point is the unique positive root of
 $$
 \sum_{i=1}^n \frac{\lambda^* y_i}{L_iN_i}\prod_{j\ne i} \Big(1+\frac{\lambda^* y_i}{L_iN_i}\Big)+1-a = \prod_{i=1}^n\Big(1+\frac{\lambda^* y_i}{L_iN_i}\Big)
$$
Substituting, we therefore have that the turning point (\emph{i.e.}, the boundary of the rate-region) satisfies
\begin{align*}
\sum_{i=1}^n \frac{x_i^*}{1+x_i^*}+\frac{1-a}{\prod_{j=1}^n(1+x_j^*)}=1
\end{align*}
where $x^*=\lambda^* \bar{x}$, as stated in the lemma.  Note that this boundary condition can also be rewritten in terms of $\tau$ as 
$\sum_{i=1}^n\tau^*_i +(1-a)\prod_{i=1}^n(1-\tau^*_i)=1$, i.e., $\sum_{i=1}^n\tau^*_i + (1-a) P_{idle} =1$. 
\end{proof}

This lemma generalises the result for ALOHA networks of \cite{MasseyMathys,Post} (which is for the specific situation where $a=1$, $N_i\equiv 1$ and $L_i\equiv 1$).  Observe that the TXOP burst sizes $N_i$ do not play a role in determining the boundary $x^*$, although they will influence the value of the throughput vector $s(x^*,\bar{N})$.

\subsection{Tangent Hyperplanes to Rate Region Boundary}\label{sec:thp}

\begin{lemma}
The tangent hyperplane to point $s(x^*,\bar{N})$ on the rate region boundary is the set
\begin{align}
T(x^*)=\left\{s: \sum_{i=1}^n b_i(x^*)s_i = \frac{1}{\prod_{j=1}^n(1+x_j^*)}\right\}
\end{align}
where
\begin{align}\label{eq:bi}
b_i(x^*) = \frac{1}{L_i\bar{N}_i}\left(\frac{\bar{N}_i-1}{\prod_{j=1}^n (1+x_j^*)} + \frac{1}{1+x_i^*}\right)
\end{align}
\end{lemma}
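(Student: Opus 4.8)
The plan is to compute the tangent hyperplane directly from the parametrisation of the boundary. By Lemma~\ref{lem:one}, every boundary point is $s(x^*,\bar N)$ with $x^*\in B$, and from (\ref{eq:tput_exp}) (with $T_c=1$) we have $s_i(x,\bar N)=L_i\bar N_i x_i/X(x,\bar N)$. So the boundary is the image of the codimension-one manifold $B=\{h(x)=1\}$ under the smooth map $x\mapsto s(x,\bar N)$. The tangent hyperplane at $s(x^*,\bar N)$ is therefore the image under the differential $ds$ of the tangent space $T_{x^*}B=\{v:\nabla h(x^*)^{\!T}v=0\}$, translated to pass through $s(x^*,\bar N)$. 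Concretely, I would look for a normal vector $b=b(x^*)$ to the hyperplane, i.e. a vector such that $b^{\!T}\,\partial s/\partial x_k=0$ for all $k$ whenever we restrict to directions tangent to $B$; equivalently, $b^{\!T}\frac{\partial s}{\partial x_k}(x^*)=\mu\,\frac{\partial h}{\partial x_k}(x^*)$ for some scalar $\mu$ and all $k$. Then the hyperplane is $\{s:b^{\!T}s=b^{\!T}s(x^*,\bar N)\}$, and I would finally identify the right-hand constant $b^{\!T}s(x^*,\bar N)$ with $1/\prod_j(1+x_j^*)$.

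The key computational steps are as follows. First, compute $\partial X/\partial x_k = (N_k-1)+\prod_{j\ne k}(1+x_j) = (N_k-1)+P_{idle}^{-1}/(1+x_k)$ at $x=x^*$, writing $Q:=\prod_j(1+x_j^*)$ for brevity. Second, differentiate $s_i = L_i\bar N_i x_i/X$: for $k\ne i$, $\partial s_i/\partial x_k = -L_i\bar N_i x_i X^{-2}\,\partial X/\partial x_k$, while for $k=i$, $\partial s_i/\partial x_i = L_i\bar N_i X^{-1} - L_i\bar N_i x_i X^{-2}\,\partial X/\partial x_i$. Third, form $\sum_i b_i(x^*)\,\partial s_i/\partial x_k$ with the claimed $b_i(x^*)$ from (\ref{eq:bi}) and check it is proportional to $\partial h/\partial x_k = 1/(1+x_k^*)^2 - (1-a)/(Q(1+x_k^*))$; here one uses the boundary relation $h(x^*)=1$, equivalently $\sum_i x_i^*/(1+x_i^*) + (1-a)/Q = 1$, to simplify. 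Fourth, evaluate $\sum_i b_i(x^*) s_i(x^*,\bar N)$: since $s_i = L_i\bar N_i x_i^*/X^*$, this is $\frac{1}{X^*}\sum_i x_i^*\big(\frac{\bar N_i-1}{Q}+\frac{1}{1+x_i^*}\big) = \frac{1}{X^*}\big(\frac{1}{Q}\sum_i(N_i-1)x_i^* + \sum_i\frac{x_i^*}{1+x_i^*}\big)$, and then rewrite the bracket using $\sum_i\frac{x_i^*}{1+x_i^*} = 1 - (1-a)/Q$ and the definition $X^* = a + \sum_i(N_i-1)x_i^* + Q - 1$ to see that the whole expression collapses to $1/Q$.

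The main obstacle is the proportionality check in the third step: verifying that the particular combination $\sum_i b_i(x^*)\,\partial s_i/\partial x_k$ is, for \emph{every} $k$, a fixed scalar multiple of $\partial h/\partial x_k(x^*)$ — this is what pins down the $b_i$'s and is where the algebra is densest. It requires carefully separating the $i=k$ term from the $i\ne k$ terms, factoring out $\prod_{j\ne k}(1+x_j^*)=Q/(1+x_k^*)$, and invoking the boundary constraint to cancel terms; a convenient reformulation is to note that on $B$ the throughput map is a diffeomorphism onto the boundary, so it suffices to exhibit one linear functional that is constant on the image near $s(x^*,\bar N)$, which reduces the problem to a gradient-alignment identity. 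An alternative, perhaps cleaner, route is to parametrise the boundary by the ray construction used in the proof of Lemma~\ref{lem:one} — each boundary point corresponds to a direction $y$ on the simplex, with $s = (\lambda^*/X(\lambda^*\bar x,\bar N))\,y$ — and differentiate with respect to the $n-1$ free simplex coordinates, using that $\lambda^*$ is the stationary point of $\lambda/X$ so that $\partial(\lambda/X)/\partial\lambda=0$ kills several terms; this makes the tangent directions explicit and the normal vector $b(x^*)$ then drops out as the orthogonal complement. Either way, once the normal vector is identified, the constant term computation in the fourth step is routine given the boundary identity from Lemma~\ref{lem:one}.
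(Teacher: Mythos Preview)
Your approach is correct and essentially matches the paper's: both compute $\partial s_i/\partial x_k$ explicitly and then verify that the stated $b(x^*)$ is a normal vector to the tangent hyperplane. The one simplification you miss concerns what you flag as the main obstacle (your third step). The paper does not ask for proportionality $\sum_i b_i\,\partial s_i/\partial x_k=\mu\,\partial h/\partial x_k$; it imposes the stronger condition $\sum_i b_i(x^*)\,\partial s_i/\partial x_k=0$ for \emph{every} $k$ outright. This is legitimate because at a boundary point the Jacobian $Ds(x^*)$ is already singular---the radial direction $x^*$ lies in its kernel, which is precisely the stationarity $\partial(\lambda/X)/\partial\lambda\big|_{\lambda^*}=0$ from Lemma~\ref{lem:one} (the observation you invoke only in your alternative route). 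Hence the image of $Ds(x^*)$ acting on all of $\mathbb{R}^n$ is the tangent hyperplane, without restricting to $T_{x^*}B$, and $b$ is determined by $b^T Ds(x^*)=0$ with no reference to $\nabla h$. In your framework you would simply discover $\mu=0$, so the dense proportionality check collapses to a vanishing check. Your fourth step, identifying the constant $b^T s(x^*)=1/\prod_j(1+x_j^*)$ via the boundary identity, is correct and fills in a detail the paper leaves to the reader.
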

\begin{proof}
Taking the derivative of the station throughput with respect to the $x_i$, from (\ref{eq:tput_exp}) we have
\small
\begin{align*}
\frac{\partial s_i(x)}{\partial x_k} = 
\begin{cases}
\frac{L_i\bar{N}_i}{X^2}\left( X-(\bar{N}_i-1)x_i-\frac{x_i}{1+x_i}\prod_{j=1}^n(1+x_j)\right) & k=i \\
- \frac{L_i\bar{N}_i}{X^2}\left( (\bar{N}_k-1)x_i+\frac{x_i}{1+x_k}\prod_{j=1}^n(1+x_j)\right) & k\ne i
\end{cases}
\end{align*}
\normalsize
The normal vector $b(x^*)$ to the tangent hyperplane at point $s(x^*,\bar{N})$ on the rate region boundary solves $\sum_{i=1}^n b_i(x^*) \partial s_i(x^*,\bar{N})/\partial x_k = 0$ $\forall k=1,...,n$.   Making use of Lemma \ref{lem:one} characterising boundary points, it can be verified that the vector $b(x^*)$ stated in the lemma  is one such normal vector. 
\end{proof}

\section{Convexity of Nonachievable Region}\label{sec:convexity}

Let $\bar{R}$ denote the complement of the rate region $R$ in the positive orthant.  $\bar{R}$ is the set of nonachievable throughput vectors lying \emph{outside} the rate region $R$, and is given by
$$
\bar{R}=\{u: u=\lambda s(x^*,\bar{N}), \lambda>1, x^*\in B\}
$$
We now show that while the rate region $R$ is non-convex, $\bar{R}$ is strictly convex.
\begin{lemma}\label{eq:convexity}The set $\bar{R}$ is strictly convex.
\end{lemma}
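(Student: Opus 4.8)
The plan is to exploit the tangent‑hyperplane description of the preceding lemma. Write $P_{idle}(x^*)=1/\prod_{j=1}^n(1+x_j^*)$, so that $T(x^*)=\{s:\langle b(x^*),s\rangle=P_{idle}(x^*)\}$ with $b(x^*)>0$. I would first show that
\begin{align}\label{eq:Rbarhalf}
\bar R=\bigcap_{x^*\in B}\big\{u\in\mathbb R^n_{++}:\ \langle b(x^*),u\rangle>P_{idle}(x^*)\big\},
\end{align}
an intersection of open half‑spaces, which yields convexity. Recall from the proof of Lemma~\ref{lem:one} that $\lambda/X$ is concave in $\lambda$ along every ray through the origin; hence each such ray meets $\partial R$ in exactly one point $s(x^{**},\bar N)$ and $R$ is the initial segment of the ray, so every $u\in\bar R$ equals $\mu\,s(x^{**},\bar N)$ with $\mu>1$. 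Granting this, each of the two inclusions in \eqref{eq:Rbarhalf}, together with the fact that $s(x^*,\bar N)$ is an exposed point of the closure $\overline{\bar R}=\{\mu\,s(x^{**},\bar N):\mu\ge1,\ x^{**}\in B\}$ (whence $\partial\bar R$ contains no segment, i.e.\ \emph{strict} convexity), reduces to the single inequality
\begin{align}\label{eq:keyineq}
\langle b(x^*),s(x^{**},\bar N)\rangle\ \ge\ P_{idle}(x^*),\qquad x^*,x^{**}\in B,
\end{align}
with equality iff $x^*=x^{**}$.

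Establishing \eqref{eq:keyineq} is the crux. Substituting \eqref{eq:bi} for $b(x^*)$, \eqref{eq:tput_exp} for $s(x^{**},\bar N)$ and \eqref{eq:Xdef} for $X$, clearing denominators, and cancelling the common term $\sum_i(\bar N_i-1)x_i^{**}$, inequality \eqref{eq:keyineq} collapses to $\sum_i x_i^{**}\prod_{j\ne i}(1+x_j^*)\ge\prod_j(1+x_j^{**})+a-1$; applying the boundary identity $h(x^{**})=1$ to the right‑hand side turns this into
\begin{align}\label{eq:star}
\sum_{i=1}^n x_i^{**}\Big(\prod_{j\ne i}(1+x_j^*)-\prod_{j\ne i}(1+x_j^{**})\Big)\ \ge\ 0 ,
\end{align}
and a routine but lengthy simplification — passing to $\tau_i=x_i/(1+x_i)$ and using $h(x^*)=h(x^{**})=1$ — shows that \eqref{eq:star} is equivalent to $\langle\nabla h(\tau^{**}),\tau^*-\tau^{**}\rangle\ge0$, i.e.\ to the statement that the surface $B$ lies on one side of each of its tangent hyperplanes. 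Equivalently, the set $C:=\{\tau\in[0,1)^n:\ \sum_i\tau_i+(1-a)\prod_i(1-\tau_i)\ge1\}$ is convex, and \eqref{eq:star} holds strictly off the diagonal iff $C$ is strictly convex.

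I would finish by establishing this convexity through the second fundamental form of $\partial C$. After the affine substitution $\beta_i=1-\tau_i$, $C$ becomes $\{\beta\in(0,1]^n:\ g(\beta)\le n-1\}$ with $g(\beta)=\sum_i\beta_i-(1-a)\prod_i\beta_i$, whose Hessian has zero diagonal and strictly negative off‑diagonal entries. Restricting this Hessian to the tangent space of a boundary level set and simplifying with the defining equation, the second fundamental form equals, up to a strictly positive factor, the quadratic form $t\mapsto\sum_i u_i^2-(\sum_i u_i)^2$ with $u_i:=t_i/(1-\tau_i)$, on the relevant $(n-1)$‑dimensional subspace. By Cauchy--Schwarz this form is positive on that subspace provided the orthogonal projection of the all‑ones vector $\mathbf{1}$ onto it has norm at most $1$, and a short computation reduces this last condition exactly to $\sum_i\tau_i^2\le(\sum_i\tau_i)^2$ — which holds because $\tau\ge0$, and holds strictly because, for $0<a<1$, at least two coordinates of $\tau$ are positive on $B$ (if exactly one were, $h=1$ would force it to equal $1$; if none, $a=0$). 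Hence the second fundamental form is positive definite, $C$ is strictly convex, and \eqref{eq:star}, and with it \eqref{eq:keyineq} and the lemma, follow. I expect \eqref{eq:star} to be the real difficulty: for $n=2$ it is elementary — there $B=\{x_1x_2=a\}$ and \eqref{eq:star} reads $x_1^{**}x_2^*+x_1^*x_2^{**}\ge2a$, which is AM--GM — but the general case seems to require the curvature argument above.
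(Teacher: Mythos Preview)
Your route and the paper's converge on the same core computation but are organized differently. The paper works directly in throughput space: it invokes the Tietze--Nakajima theorem to reduce strict convexity of $\bar R$ to a purely \emph{local} supporting-hyperplane condition, expands $s(y^*,\bar N)$ to second order around $x^*$ along $B$, and is left with showing $\sum_{i<j}z_iz_j<0$ for $z_j=\delta_j/(1+x_j^*)$ subject to a linear constraint $r^Tz=0$; this is dispatched by a Bessel-inequality lemma of Post. Your argument instead seeks the \emph{global} inequality \eqref{eq:keyineq}, pushes it through the change of variables to the set $C=\{\tau:\tilde h(\tau)\ge1\}$, and checks positive-definiteness of the second fundamental form of $\partial C$. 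After your substitutions one has $u_i=-z_i$ and $c_i=(\sum_k\tau_k^*)\,r_i$, so your tangent constraint $c\cdot u=0$ is exactly the paper's $r\cdot z=0$, and your quadratic $\sum_iu_i^2-(\sum_iu_i)^2>0$ is exactly $-2\sum_{i<j}z_iz_j>0$. Thus your Cauchy--Schwarz step, reducing to $(\sum\tau_i)^2>\sum\tau_i^2$, is the specific instance of Post's lemma that is actually needed here; the paper quotes a slightly more general statement.

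Two points deserve attention. First, the ``routine but lengthy'' equivalence between \eqref{eq:star} and $\langle\nabla\tilde h(\tau^{**}),\tau^*-\tau^{**}\rangle\ge0$ is the linchpin of your reduction and should be written out; it is not hard, but it is where your argument would fail if it were going to. Second, and more substantively, your curvature computation yields only \emph{local} strict convexity of $C$; to obtain the global gradient inequality (and hence the global \eqref{eq:star} you need for the half-space representation of $\bar R$) you still require a local-to-global principle such as Tietze--Nakajima applied to $C$. You never invoke one. The paper avoids this detour by applying Tietze--Nakajima directly to $\bar R$ at the outset, so that only the local second-order check is needed --- which, as noted, is the same computation you do for $C$. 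In effect your proof is the paper's proof with an extra round trip through $\tau$-space; it is correct once the local-to-global step is supplied, but not shorter.
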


Before proving Lemma \ref{eq:convexity} we note the following fact in \cite{Post} using Bessel's inequality.
\begin{lemma}\label{lem:post}\cite{Post} 
 Let vector $r\in\mathbb{R}^n$ be such that $0\leq r_j \leq 1$ for $j=1,\dotsc,n$ and $\sum_{j=1}^nr_j = n-1$.   Let vector $z\in\mathbb{R}^n$ satisfy $r^Tz=0$.  Then $\sum_{j=1}^n\sum_{i=1}^{j-1} z_iz_j < 0$. 
\end{lemma}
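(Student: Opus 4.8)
The plan is to prove Lemma~\ref{lem:post} by a direct computation that reduces the claimed inequality to a statement about the norm of a projection, to which Bessel's inequality applies. First I would observe the elementary identity
\begin{align*}
2\sum_{j=1}^n\sum_{i=1}^{j-1} z_iz_j = \Big(\sum_{j=1}^n z_j\Big)^2 - \sum_{j=1}^n z_j^2,
\end{align*}
so that the target inequality $\sum_{j}\sum_{i<j} z_iz_j < 0$ is equivalent to $\big(\sum_j z_j\big)^2 < \sum_j z_j^2 = \|z\|^2$; that is, it suffices to show $|\1^T z| < \|z\|$, where $\1$ is the all-ones vector. The hypothesis $r^T z = 0$ says $z$ is orthogonal to $r$, and the constraints $0 \le r_j \le 1$, $\sum_j r_j = n-1$ are exactly what will make the Cauchy--Schwarz/Bessel estimate strict.

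The key step is to exploit the particular structure of $r$. Writing $r_j = 1 - q_j$ with $q_j = 1 - r_j \in [0,1]$ and $\sum_j q_j = 1$, the vector $q$ lies in the unit simplex, and $r^T z = 0$ becomes $\1^T z = q^T z$. Thus $\1^T z$ is a convex combination $\sum_j q_j z_j$ of the coordinates of $z$. Now I would apply Bessel's inequality (equivalently Cauchy--Schwarz) to the inner product $q^T z$: since $\|q\|^2 = \sum_j q_j^2 \le \big(\sum_j q_j\big)^2 = 1$ with equality only when $q$ is a coordinate vector, we get $|\1^T z| = |q^T z| \le \|q\|\,\|z\| \le \|z\|$. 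The inequality is strict unless both Cauchy--Schwarz is tight ($z$ parallel to $q$) and $\|q\| = 1$ ($q = e_k$ for some $k$), which would force $z$ to be a multiple of $e_k$; but then $r^T z = r_k z_k = 0$ with $r_k = 1 - q_k = 0$... I need $z \neq 0$ here, so I should note that the case $z = 0$ makes $\sum\sum z_iz_j = 0$ and the strict inequality fails, meaning the lemma as used must implicitly assume $z \neq 0$ (this matches \cite{Post}); assuming $z \neq 0$, the equality case is genuinely impossible because $r_k z_k = 0$ with $z_k \neq 0$ would need $r_k = 0$, which is allowed, so I must be more careful.

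The main obstacle is therefore pinning down the strictness, and the clean way around it is to not split off a single coordinate but to argue directly: $\|q\|^2 \le 1$ with equality iff $q$ is extreme in the simplex, and separately track when $q^T z = \|q\|\|z\|$. A cleaner route avoiding case analysis is to write $|\1^T z|^2 = |q^T z|^2$ and bound using the decomposition of $z$ into its component along $q$ and orthogonal to $q$: $|q^Tz|^2 = \|q\|^2 \|z_{\parallel}\|^2 \le \|z_{\parallel}\|^2 \le \|z\|^2$, and the last inequality $\|z_\parallel\| \le \|z\|$ is strict unless $z$ is parallel to $q$, while $\|q\|^2 \le 1$ with the intermediate inequality strict unless $q = e_k$; combining, equality throughout forces $z = c\, e_k$, and then $0 = r^Tz = r_k z_k$. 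Here I use that $r_k = 1-q_k = 1 - 1 = 0$ is consistent, so equality is \emph{not} excluded by orthogonality alone — this shows the lemma genuinely requires $z\neq 0$ together with $z \notin \mathrm{span}\{e_k : r_k = 0\}$, or equivalently that $\cite{Post}$'s statement tacitly assumes $z$ has no zero pattern supported on the zero set of $r$. I would resolve this by stating the hypothesis as: $z \neq 0$ and (as is the case in the application, where all $r_j > 0$ strictly, since there $r_j = (1+x_j^*)^{-1}\cdot(\text{positive})$) $r_j > 0$ for all $j$; then $\|q\| < 1$ strictly, and the inequality $|\1^Tz| = |q^Tz| \le \|q\|\,\|z\| < \|z\|$ is strict, giving $\sum_{j}\sum_{i<j} z_iz_j < 0$ as claimed.
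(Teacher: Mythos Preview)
The paper does not give its own proof of this lemma; it simply cites \cite{Post} with the remark that it follows from Bessel's inequality. Your approach is precisely the one intended: the identity $2\sum_{i<j}z_iz_j = (\sum_j z_j)^2 - \|z\|^2$ reduces the claim to $|\1^{T}z| < \|z\|$, and writing $r = \1 - q$ with $q$ in the unit simplex turns $r^{T}z = 0$ into $\1^{T}z = q^{T}z$, after which Cauchy--Schwarz (the one-vector case of Bessel) gives $|q^{T}z| \le \|q\|\,\|z\| \le \|z\|$.

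Your strictness analysis is correct and worth recording: as literally stated the lemma fails at the extreme configuration $r = \1 - e_k$, $z = c\,e_k$ with $c \neq 0$, where $r^{T}z = 0$ but $\sum_{i<j}z_iz_j = 0$. In the paper's application (the proof of Lemma~\ref{eq:convexity}) the $r_j$ are strictly positive whenever $x^*$ has all coordinates positive, and the perturbation $\delta \neq 0$ forces $z \neq 0$; under those hypotheses $\|q\| < 1$ strictly, and your bound $|\1^{T}z| = |q^{T}z| \le \|q\|\,\|z\| < \|z\|$ goes through. So you have both identified a genuine edge case in the stated lemma and supplied the extra hypothesis that makes the argument watertight in the context where it is used.
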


\begin{proof}\emph{Lemma \ref{eq:convexity}}
A supporting hyperplane of set $\bar{R}$ at boundary point $x^*$ is such that (i) set $\bar{R}$ is entirely contained in one closed half-space and (ii) $x^*$ lies on the hyperplane.  By the Tietze-Nakajima Theorem \cite{Tietze}, the open and connected set $\bar{R}$ is convex if for every boundary point $x^*$ there is a locally supporting hyperplane.   This is satisfied if for all $x^*\in B$ and $y^*\in B$ sufficiently close to $x^*$, $s(y^*,\bar{N})$ lies above the tangent plane $T(x^*)$.  That is, it is sufficient to show that
\begin{align}\label{eq:convexcond}
\sum_{i=1}^n b_i(x^*)s_i(y^*,\bar{N}) > \frac{1}{\prod_{j=1}^n(1+x_j^*)}
\end{align}
for all $y^*$ sufficiently close to $x^*$.  Substituting for $b_i(x^*)$ from (\ref{eq:bi}), the LHS can be rewritten as
\small
\begin{align*}
\frac{1}{\prod_{j=1}^n(1+x_j^*)} \left(\frac{\sum_{i=1}^n(\bar{N}_i-1)y_i^*+\prod_{j=1}^n(1+x^*_j)\sum_{i=1}^n\frac{y_i^*}{1+x_i^*}}{X(y^*,\bar{N})}\right)
\end{align*}
\normalsize
Condition (\ref{eq:convexcond}) is satisfied if the term in brackets is greater than unity, \emph{i.e.} provided 
$$
\prod_{j=1}^n(1+x^*_j)\sum_{i=1}^n\frac{y^*_i}{1+x_i^*}>\prod_{j=1}^n(1+y^*_j)+a-1
$$
Letting $y_i^*=x_i^*+\delta_i$ and using Lemma \ref{lem:one} for points $x^*$ and $y^*$, this condition becomes
\begin{align}\label{eq:almostthere}
\prod_{j=1}^n(1+x^*_j)\left(1+\sum_{i=1}^n\frac{\delta_i}{1+x_i^*}\right)>\prod_{j=1}^n(1+x^*_j+\delta_j)
\end{align}
Expand the RHS as
$$
\prod_{j=1}^n(1+x^*_j)\left(1+\sum_{i=1}^n\frac{\delta_i}{1+x^*_i}+ \sum_{j=1}^n\sum_{i=1}^{j-1}\frac{\delta_i}{(1+x^*_i)}\frac{\delta_j}{(1+x^*_j)} +\epsilon\right)
$$
where $\epsilon$ involves cubic and higher terms.   Condition (\ref{eq:almostthere}) then can be rewritten as
\begin{align*}
\sum_{j=1}^n\sum_{i=1}^{j-1}\frac{\delta_i}{(1+x^*_i)}\frac{\delta_j}{(1+x^*_j)} +\xi < 0
\end{align*}
for some $\xi$ that involves cubic and higher terms.
For $\delta$ sufficiently small, the sign of the LHS is determined by the quadratic term. 

Since $y^*\in B$, from the first-order conditions on points on the boundary we get that the perturbation $\delta$ needs to be orthogonal to $\nabla h(x) |_{x=x^*}$, \emph{i.e.} $\sum_{i=1}^n \delta_i \partial h(x) / \partial x_i |_{x=x^*} = 0$ where $h(x)$ is given by (\ref{eq:h}). Now we have
\begin{align*}
\sum_{i=1}^n \delta_i \frac{\partial h(x)}{\partial x_i} \Big|_{x=x^*} & = \sum_{j=1}^n \frac{\delta_j}{1+x_j^*} \left(\frac{1}{1+x_j^*} - \frac{1-a}{\prod_{i\in\mathcal{M}} (1+x_i^*)} \right) \\
& = \sum_{j=1}^n \frac{\delta_j}{1+x_j^*} \left( \sum_{i=1,i\neq j}^n \frac{x_i^*}{1+x_i^*} \right)
\end{align*}
where we use the boundary property of $x^*$. Now define the following
\begin{align*}
z_j := \frac{\delta_j}{1+x_j^*},\; r_j :=\frac{\sum_{i=1,i\neq j}^n \frac{x_i^*}{1+x_i^*}}{1-\frac{1-a}{\prod_{i=1}^n (1+x_i^*)}} \; \forall\; j=1,\dotsc,n 
\end{align*}
It can be verified using the boundary property of $x^*$ that $0\leq r_j \leq 1$ for $j=1,\dotsc,n$ and $\sum_{j=1}^nr_j = n-1$ with $z$ and $r$ orthogonal.   Then by Lemma \ref{lem:post}, $ \sum_{j=1}^n\sum_{i=1}^{j-1} z_iz_j < 0$ as required. 
\end{proof}

\section{Maximal Convex Subsets of the Rate Region}\label{sec:maximal}

We are now in a position to state our main result.
\begin{theorem}\label{th:main} For any point $s(x^*)$ on the boundary of rate region $R$, the set 
\small
\begin{align*}
C(x^*)=\left\{s:\sum_{i=1}^n \alpha_i(x^*)s_i \le 1, s_i\ge 0\right\}
\end{align*}
\normalsize
is the maximal convex subset of $R$ containing $s(x^*)$, where $\alpha_i(x^*)=\frac{\bar{N}_i-1 + \prod_{j=1,j\ne i}^n(1+x_j^*)}{L_i \bar{N}_i}$.
\end{theorem}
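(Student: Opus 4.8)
The plan is to recognise $C(x^*)$ as the intersection of the positive orthant with the closed half-space bounded by the tangent hyperplane $T(x^*)$ of Section~\ref{sec:thp}, and then to prove two things: that $C(x^*)\subseteq R$, and that every convex set $S$ with $s(x^*)\in S\subseteq R$ satisfies $S\subseteq C(x^*)$. The identification is immediate: multiplying $b_i(x^*)$ in (\ref{eq:bi}) by $\prod_{j=1}^n(1+x_j^*)$ gives exactly $\alpha_i(x^*)$, so, writing $g(s):=\sum_{i=1}^n\alpha_i(x^*)s_i$, the hyperplane $\{g=1\}$ is precisely $T(x^*)$; equivalently one verifies directly from (\ref{eq:tput_exp})--(\ref{eq:Xdef}) that $g(s(x^*,\bar N))=1$ if and only if $h(x^*)=1$. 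Hence $s(x^*)\in C(x^*)$, the set $C(x^*)=\{s\ge0:g(s)\le1\}$ is convex as an intersection of half-spaces, and each $\alpha_i(x^*)>0$ since $\bar N_i\ge1$, $L_i>0$.

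For $C(x^*)\subseteq R$ I would use that $R$ and $\bar R$ partition the positive orthant, sharing the boundary $B$, which belongs to $R$. The proof of Lemma~\ref{eq:convexity} establishes in effect that $T(x^*)$ is a \emph{locally} supporting hyperplane of $\bar R$ at $s(x^*)$ --- every boundary point $s(y^*,\bar N)$ of $\bar R$ near $s(x^*)$ has $g(s(y^*,\bar N))>1$, hence $g>1$ throughout a neighbourhood of $s(x^*)$ in $\bar R$ --- and since $\bar R$ is convex and open, the standard argument (an affine function on the chord from a hypothetical point with $g\le1$ back to $s(x^*)$ cannot respect the local support) promotes this to $\bar R\subseteq\{g>1\}$. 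Thus $\{s\ge0:g(s)\le1\}$ is disjoint from $\bar R$, so it is contained in $R$, and therefore so is $C(x^*)$.

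For maximality, let $S$ be convex with $s(x^*)\in S\subseteq R$ and suppose, for contradiction, that some $v\in S$ has $g(v)>1$; then $v\ne s(x^*)$, and by convexity the segment $w(t)=(1-t)s(x^*)+tv$ lies in $S\subseteq R$ for all $t\in[0,1]$. For a direction $y$ in the positive orthant let $t^*(y)$ be the largest radius achievable in direction $y$, so that (by Lemma~\ref{lem:one} and its proof) $R$ meets the ray through $y$ exactly in the segment out to $t^*(y)\,y\in B$; hence $w(t)\in R$ forces $\|w(t)\|\le t^*\!\big(w(t)/\|w(t)\|\big)$ for every $t$, with equality at $t=0$ since $s(x^*)\in B$. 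Since $\nabla g=\alpha(x^*)$ is the normal of $T(x^*)$, a short computation --- using that $B$ is a smooth hypersurface near $s(x^*)$ with tangent plane $T(x^*)$, so $t^*(\cdot)$ is differentiable there --- shows the $t$-derivative of $\|w(t)\|-t^*\!\big(w(t)/\|w(t)\|\big)$ at $t=0$ equals $\|s(x^*)\|\,(g(v)-1)>0$. Therefore $\|w(t)\|>t^*\!\big(w(t)/\|w(t)\|\big)$ for all sufficiently small $t>0$, i.e. $w(t)\notin R$, contradicting $w(t)\in S\subseteq R$. Hence no such $v$ exists, $S\subseteq C(x^*)$, and $C(x^*)$ is the unique maximal convex subset of $R$ containing $s(x^*)$.

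The step I expect to be the real obstacle is this last one, and in particular the smoothness input it requires: that the rate-region boundary is a genuine $C^2$ hypersurface at $s(x^*)$ with tangent plane $T(x^*)$ (so that $t^*(\cdot)$ is differentiable and the derivative computation is valid). This reduces to $x\mapsto s(x,\bar N)$ restricting to an embedding of the level set $\{h=1\}$ near $x^*$, which holds for $n\ge2$ away from the degenerate corner $x^*=0$. An equivalent way to see the same contradiction is to note that, locally, $R$ is the region on the origin side of a convex $C^2$ hypersurface tangent to $T(x^*)=\{g=1\}$ at $s(x^*)$: the chord $w(t)$ rises above $T(x^*)$ by $t\,(g(v)-1)$, linearly in $t$, whereas the boundary departs from $T(x^*)$ only quadratically, so $w(t)$ overshoots the boundary into $\bar R$. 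Everything else is routine bookkeeping on Lemmas~\ref{lem:one}--\ref{eq:convexity}.
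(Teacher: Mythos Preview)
Your proof is correct and matches the paper closely on the two easy parts: the identification of $C(x^*)$ with the tangent half-space at $s(x^*)$ (via the rescaling $\alpha_i=\prod_j(1+x_j^*)\,b_i$), and the inclusion $C(x^*)\subseteq R$ (via the convexity of $\bar R$ promoting the local tangent hyperplane to a global supporting hyperplane). Where you diverge is in the maximality step. The paper argues in one line: $T(x^*)$ is the \emph{unique} supporting half-space to the closure $\bar R\cup\partial R$ at $s(x^*)$, so any convex $S\subseteq R$ containing $s(x^*)$, being separable from $\bar R$ by a hyperplane through $s(x^*)$, must lie in $\{g\le 1\}$. You instead give a direct ``chord'' argument: if $v\in S$ had $g(v)>1$, then the segment $w(t)$ from $s(x^*)$ toward $v$ has $g$ increasing linearly in $t$, while the smooth boundary of $R$ departs from its tangent plane only to second order, so $w(t)$ immediately leaves $R$---and your derivative formula $\frac{d}{dt}\big[\|w(t)\|-t^*(w(t)/\|w(t)\|)\big]\big|_{t=0}=\|s(x^*)\|\,(g(v)-1)$ does check out. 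Both arguments ultimately rest on the same regularity input you correctly flag (that the boundary is a $C^1$ hypersurface at $s(x^*)$, equivalently that the supporting hyperplane there is unique); the paper's separating-hyperplane route is shorter and avoids the radial-function calculus, while your version is more explicit and self-contained.
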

\begin{proof}
For any point $s(X^*)$ on the boundary of rate region $R$, the tangent hyperplane is also a supporting hyperplane of the nonachievable set $\bar{R}$ (due to the convexity of $\bar{R}$, Lemma \ref{eq:convexity}).   It follows that the set $C(x^*)$ of points lying below the hyperplane at $s(x^*)$ lie within rate region $R$ \emph{i.e.} $C(x^*)\subset R$.   The set $C(x^*)$ is given by
\begin{align}\label{eq:maxconvex}
C(x^*)=\left\{s:\sum_{i=1}^n b_i(x^*)s_i \le \frac{1}{\prod_{j=1}^n(1+x_j^*)}, s_i\ge 0\right\}
\end{align}
Substituting from (\ref{eq:bi}) for the $b_i(x^*)$ yields the expression for $C(x^*)$ in the statement of the theorem. Since $C(x^*)$ is formed by the intersection of $\mathbb{R}_+^n$ and the (unique) supporting half-space to the union of $\bar{R}$ and the boundary of $R$, maximality follows.   
\end{proof}

This theorem in illustrated in Fig. \ref{fig:rateregion} for a WLAN where $\bar{N}_i=\bar{N}$, $i=1,...,n$.   In Fig. \ref{fig:rateregion}(a) the boundary point is the symmetric one where $s_i(x^*)=s_j(x^*)$, $\forall i,j=1,...,n$.   The supporting hyperplane is indicated by the 45$^o$ line and the hashed area indicates the maximal convex subset of $R$ containing $s(x^*)$.   Fig. \ref{fig:rateregion}(b)  shows an asymmetric example where $s_i(x^*)\ne s_j(x^*)$ when $i\ne j$.    We summarise the symmetric case in the following corollary to Theorem \ref{th:main}:
\begin{corollary}
When $\bar{N}_i=\bar{N}$, $i=1,...,n$, the maximal convex subset associated with the symmetric boundary point $s(x^*)$ \emph{i.e.} $s_i(x^*)=s_j(x^*)=s^*$, $i,j=1,...,n$ is
\begin{align*}
\{s:\sum_{i=1}^n s_i \le 1/\alpha^*, s_i\ge 0\}
\end{align*}
where $\alpha^*=\frac{\bar{N} -1+ (1+s^*)^{n-1}}{X(x^*)}$.
\end{corollary}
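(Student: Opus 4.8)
The plan is to obtain the corollary as an immediate specialisation of Theorem~\ref{th:main}, so the ``proof'' is really a substitution together with one short algebraic simplification. The first thing I would pin down is the nature of the symmetric boundary point. For a WLAN in which the stations are identical up to their TXOP limits, i.e. $L_i\equiv L$ and $\bar N_i\equiv\bar N$, the throughput map $x\mapsto s(x,\bar N)$ is invariant under permutations of the coordinates; combined with the uniqueness of the turning point $\lambda^*$ established in the proof of Lemma~\ref{lem:one} (applied along the symmetric ray $y_i\equiv 1/n$), this forces the attempt vector attaining the symmetric throughput point to be itself symmetric, $x_i^*=x^*$ for a single scalar $x^*$. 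I would record this as the reduction step: henceforth all coordinates of $x^*$ are equal, and this symmetry of the attempt vector is exactly what is needed for the $n$ coefficients below to coincide.

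Next I would substitute $\bar N_i=\bar N$ and $x_i^*=x^*$ into the coefficients $\alpha_i(x^*)$ of Theorem~\ref{th:main}. Since $\prod_{j\ne i}(1+x_j^*)=(1+x^*)^{n-1}$ no longer depends on $i$, every $\alpha_i(x^*)$ collapses to the common value $\alpha^*=\bigl(\bar N-1+(1+x^*)^{n-1}\bigr)/(L\bar N)$, and the defining constraint $\sum_i\alpha_i(x^*)s_i\le 1$ of $C(x^*)$ becomes simply $\alpha^*\sum_i s_i\le 1$, i.e. $\sum_i s_i\le 1/\alpha^*$. That this set lies in $R$ and is the maximal convex subset containing $s(x^*)$ requires nothing new: it is precisely what Theorem~\ref{th:main} already asserts, now read off at a symmetric point. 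As a consistency check I would verify that $\alpha^*=1/(n s^*)$, so that the symmetric point $s^*\1$ sits exactly on the face $\sum_i s_i=1/\alpha^*$, as it must for a maximal convex subset through that point.

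The only genuine computation is turning the common coefficient into the closed form stated in the corollary. Here I would use the boundary identity of Lemma~\ref{lem:one} at the symmetric point, $n x^*(1+x^*)^{n-1}+(1-a)=(1+x^*)^n$, to eliminate the $(1+x^*)^n-1$ term in $X(x^*)$ from (\ref{eq:Xdef}), obtaining $X(x^*)=n x^*\bigl(\bar N-1+(1+x^*)^{n-1}\bigr)$; feeding this back into the throughput expression $s^*=\bar N x^* L/X(x^*)$ from (\ref{eq:tput_exp}) then lets me re-express $\alpha^*$ through $s^*$ and $X(x^*)$ in the displayed form. I expect this last rearrangement -- keeping the factors $L$ and $\bar N$ straight while substituting the Lemma~\ref{lem:one} relation -- to be the main, and really the only, obstacle; conceptually the corollary is just Theorem~\ref{th:main} evaluated at the symmetric boundary point, with the $n$ coefficients merging into one by symmetry.
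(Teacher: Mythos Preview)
Your approach is correct and matches the paper's treatment: the corollary is stated immediately after Theorem~\ref{th:main} with no proof given, so it is understood as a direct specialisation of that theorem at the symmetric boundary point, which is precisely what you carry out. Your added remark that one also needs $L_i\equiv L$ (not only $\bar N_i\equiv\bar N$) for the symmetric throughput point to have a symmetric attempt vector, together with the consistency check $\alpha^*=1/(ns^*)$, usefully fills in details the paper leaves implicit.
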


\section{Application to Utility-based Optimisation}\label{sec:opt}

%

In this section we briefly illustrate use of the results of the previous sections in the design of fair throughput allocations in mesh networks of 802.11 WLANs. In previous work \cite{logconvexity10,maxmin10} we established the log-convexity of the rate-region of mesh networks of 802.11 WLANs.   By working in terms of log-transformed rates, this then implies that one can use the Network-Utility-Maximization (NUM) framework of Kelly~\cite{Kelly97} for the class of utility functions $U(\cdot)$ such that $U(\exp(\cdot))$ is concave.  The commonly used iso-elastic/constant relative risk aversion family of utility functions~\cite{Arrow1970,Pratt1964,Merton1971,Venter1983,Xie2000,MoWalrand2000} given by
\begin{align*}
U(x) = 
\begin{cases}
\frac{x^{1-\alpha}-1}{1-\alpha} & x\geq 0,\; \alpha\geq 0, \; \alpha \neq 1 \\
\log(x) & x\geq 0, \; \alpha=1
\end{cases}
\end{align*}
satisfy the above requirement only when $\alpha \geq 1$. The requirement that $U(\exp(\cdot))$ is concave when $U(\cdot)$ is already concave is a stringent one and excludes from consideration many families of utility functions used by economists to model user behaviours~\cite{Xie2000}.   For example,
\begin{enumerate}
\item[(i)] Hyperbolic absolute risk aversion family of utility functions~\cite{Merton1971}, which is given by
\begin{align*}
U(x) = \frac{\alpha}{1-\alpha}\left[ \left(\beta+\frac{x}{\gamma}\right)^{1-\alpha}-1\right], \; \beta+\frac{x}{\gamma} > 0
\end{align*}
for $\alpha\neq 0$ and $\alpha\neq 1$ (via limits for $\alpha\in\{0, 1, +\infty\}$);
\item[(ii)] Linear exponential family of utility functions~\cite{Frank1990}, given by
\begin{align*}
U(x) = x - \beta \exp(-\alpha x), \; x\geq 0 
\end{align*}
where $\beta, \alpha \geq 0$; and
\item[(iii)] Power risk aversion family of utility functions~\cite{Venter1983,Xie2000}, which is based on the Weibull distribution and is given by
\begin{align*}
U(x) = \frac{1}{\beta}\left[1-\exp\left( -\beta \left(\frac{x^{1-\alpha}-1}{1-\alpha} \right)\right) \right], \; \beta, \alpha \geq 0,
\end{align*}
for $x\geq 0$ where the edge cases of $\beta=0$ and $\alpha=1$ are defined via limits.
\end{enumerate}
For suitable parameter settings these families have the decreasing absolute risk aversion property~\cite{Arrow1970,Pratt1964}, they have $d^3U(x)/dx^3 > 0$ and are increasing and concave, and therefore fit empirically observed user behaviours~\cite{Xie2000}.   Yet for most parameter settings $U(\exp(\cdot))$ is not a concave function for these families, and thus utility optimisation cannot be addressed via the approach in \cite{logconvexity10,maxmin10}.  Since the original motivation for the NUM framework~\cite{Kelly97} was to bring in economic considerations to rate allocation in networks, this potentially represents a major deficiency that can addressed using the results in the present paper. With this in mind, we can revisit the setting in \cite{maxmin10} of a mesh network formed from 802.11 WLANs/cliques created using appropriate frequency assignment.   Using Theorem~\ref{th:main} we can work in an appropriate convex subset of the rate-region of the network. For this we assume that an appropriate operating point on the boundary is chosen for each WLAN/clique,  the subset rate-region is then an intersection of polytopes given by \eqref{eq:maxconvex} and the theory from \cite{Kelly97} directly applies.

We illustrate this using a specific network with three flows shown in Figure~\ref{fig:example} and using two utility functions from the power risk aversion family~\cite{Xie2000}, namely, $\alpha=0.1$ and $\beta=1$ giving $U_1(x)=1-\exp\left(-\left(x^{0.9}-1\right)/0.9\right)$, and $\alpha=2.0$ and $\beta=1$ giving $U_2(x)=1-\exp\left(\left(x^{-1}-1\right)\right)$. We will compare this with the $\log(\cdot)$ utility function corresponding to proportional fairness. In all cliques we assume that $a=1/9$ and that the TXOP value is set to 1, i.e., its minimum possible value. For flow 1 we assume that $L/T_c$ corresponds to 12 Mbps wherever the flow is active; the corresponding numbers for flow 2 and flow 3 are assumed to be 6 Mbps and 12 Mbps, respectively. From symmetry of the problem it is clear that one can find the various utility optimal solutions by choosing the same operating point for flow 2 in cliques 2 and 3; let this be $x_2^*$. Then we have that $x_1^*=x_3^*=a/x_2^*$. The optima that result are as follows: for proportional fairness $x_2^*=0.2094$; for utility 1 $x_2^*=0.3767$; and for utility 2 $x_2^*=0.3516$. Given these operating points for the different cliques, one can use Theorem~\ref{th:main} to then find the optimal solution using standard convex optimization techniques. The boundary of clique 2, the respective optimizers and the maximal convex subsets from Theorem~\ref{th:main} for this problem are illustrated in Figure~\ref{fig:output}. Note that only the proportionally fair optimizer can be determined using the log-convexity ideas in \cite{logconvexity10,maxmin10}. In this simple example one can directly calculate the optimizers but the same idea carries through to the more general topologies presented in \cite{maxmin10}.

\begin{figure}
\centering
\includegraphics[width=0.99\columnwidth]{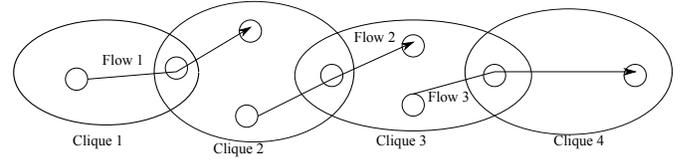}
\caption{Topology for numerical example with four cliques carrying three flows.}\label{fig:example}
\end{figure}

\begin{figure}
\centering
\includegraphics[width=0.99\columnwidth]{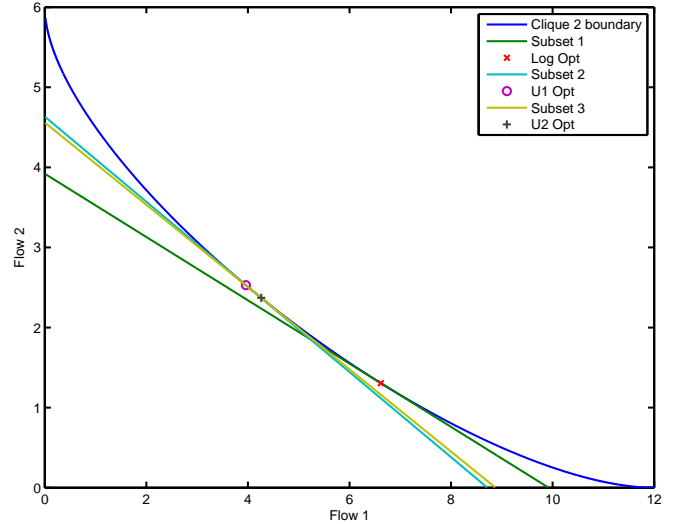}
\caption{Rate-region of clique 2 illustrated with the different utility optimal points along with the corresponding maximal convex subset.}\label{fig:output}
\end{figure}

\section{Conclusions}\label{sec:concl}

In this paper we characterise the maximal convex subsets of the (non-convex) rate region in 802.11 WLANs.   In addition to being of intrinsic interest as a fundamental property of 802.11 WLANs,  this characterisation can be exploited to allow the wealth of convex optimisation approaches to be directly applied to 802.11 WLANs.  In particular, standard utility-based fairness approaches can be applied for the important class of utility functions where $U(\exp(\cdot))$ is not concave.

{}

\end{document}